\newcommand{\gs}{\sigma}
\newcommand{\gO}{\Omega}
\newtheorem{theorem}{Theorem}
\newtheorem{corollary}{Corollary}
\newtheorem{lemma}{Lemma}
\newtheorem{proposition}{Proposition}
\newtheorem{remark}{Remark}
\begin{document}


\title{On Nash Equilibria in 
Play-Once and Terminal Deterministic Graphical Games}

\author{
Endre Boros\footnote
{MSIS \& RUTCOR, Business School, Rutgers University,
100 Rockafellar Road, Piscataway NJ  08854;
\textit{endre.boros@rutgers.edu}}
\and
Vladimir Gurvich
\footnote
{National Research University, Higher School of Economics, Moscow, Russia; \\
RUTCOR, Rutgers University,
100 Rockafellar Road, Piscataway NJ  08854;
\textit{vladimir.gurvich@gmail.com}}
\and Kazuhisa Makino
\footnote{Research Institute for Mathematical Sciences, 
Kyoto University, Kyoto, Japan;
\textit{makino@kurims.kyoto-u.ac.jp}}
}
\maketitle


\begin{abstract}
We consider finite $n$-person 
deterministic graphical games
and study the existence of pure stationary Nash-equilibrium  
in such games. 
We assume that all infinite plays are equivalent 
and form a unique outcome, while each terminal position 
is a separate outcome.
It is known that for $n=2$ such a game always has a Nash equilibrium, 
while that may not be true for $n > 2$.
A game is called {\em play-once} if 
each player controls a unique position and 
{\em terminal}  if any terminal outcome is better 
than the infinite one for each player. 
We prove in this paper that play-once games have Nash equilibria. 
We also show that terminal games have Nash equilibria if they have at most three terminals. 
\newline
{\bf Keywords:}
$n$-person deterministic graphical game,
play-once game, terminal game, 
Nash equilibrium, Nash-solvability,
pure stationary strategy, digraph, directed cycle. 
\end{abstract}

\section{Introduction}

We consider a family of games with perfect information and 
without moves of chance, played on finite directed graphs.  
Given a directed graph $G = (V, E)$, the sets $V$ and $E$ 
represent positions and moves of the players, respectively.
We are also given a partition 
$V = V_1 \cup \dots \cup V_n \cup T$
and an initial position $v_0 \in V \setminus T$, where 
$V_i$  is the set of positions controlled by
player $i \in I = \{1, \dots, n\}$ and 
$T$ is the set of terminals, that is, 
the vertices of out-degree 0. 

In the game each player $i\in I$ chooses a 
move from each position $v \in V_i$. 
Such arcs form a subgraph, called a \textit{situation}, in which there is a unique walk starting at the initial position. That walk may end in a directed cycle, or at a terminal position. Accordingly, the outcome of the game is defined as $\infty$ if the walk ended in a cycle, or the terminal position $a\in T$ at which the walk ended. 
Players have given preferences over the set of outcomes, $A=T\cup \{\infty\}$. 
Given a situation, a player $i\in I$ may be able to improve the outcome according to its own preferences by choosing different outgoing arcs from the positions $v\in V_i$. 
A situation is called a \textit{Nash equilibrium} (NE) if none of the players have such an improving deviation from it. 

This family of games was introduced by Washburn \cite{Was90} 
in 1990 for the case of $n=2$ players and 
opposite preferences (zero-sum games). 
Furthermore, every non-terminal outcome $\infty$ was treated as a draw. 
For this reason, these games were referred to as ``chess-like" 
in \cite{BEGM12,BG09,BGOR14,GO14}. 
Several other publications consider 
a natural extension of these games for the case $n \geq 2$  
\cite{AGH10,AHMS12,BEGM12,BG03,BGMW11,DS02,GO14}. 
Washburn \cite{Was90} called his  
games \textit{deterministic graphical} (DG)  
and we use the same name for the above more general family.

It is known that 2-person DG games have NE \cite{BG03}. 
This result is based on a more general theory of 
Nash-solvability of tight game forms \cite{Gur75,Gur89},  
while the zero-sum case was studied earlier
by Edmonds and Fulkerson \cite{EF70}, see also \cite{Gur73,Was90} 
and some later works \cite{Gur21,GN21,GN21A,GN21B,GN22}. 
The existence of NE (in pure strategies)   
for several other classes of games was studied in  
\cite{ARV09,AGH10,AHMS12,BG09,BGMW11,Gur17,Gur18,Gur21a,BEGV24,GK18,HL97,
HR04,KMMM90,Mil96,Mil96_2,MS96,Ros73}.

However, an $n$-person DG game may have no NE when $n>2$;  
see examples in \cite{GO14} for $n=4$ and \cite{BGMOV18} for $n=3$.  

A digraph $G = (V,E)$ is called symmetric 
if $(u,w) \in E$  whenever $(w,u) \in E$ 
and position $w$ is not a terminal. 
It is also known that 
a DG game with 
a symmetric digraph has a NE \cite{BFGV23}. 

In this paper we focus on two special families of DG games. 
In the so called \textit{play-once} games we have $|V_i|=1$ for all $i\in I$, 
while in \textit{terminal} games all players dislike the outcome $\infty$ and 
prefer to it any 
terminal outcome. 

The structure of play-once and terminal DG games was studied in \cite{BG03}, 
where it was shown that such games have NE. 
We conjecture that, in fact, either of these properties imply the existence of a NE. 
While we can only prove a partial result about the terminal case, 
we can fully verify the conjecture for the play-once games.

\begin{theorem}\label{t-3T}
    A terminal DG game has a NE if $|T|\leq 3$. 
    Furthermore, it has a terminal NE if in addition some terminal can be reached by a directed path from the initial position. 
\end{theorem}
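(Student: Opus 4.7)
The plan is to proceed by cases on $|T|$, reducing when possible to the two-person NE result cited as \cite{BG03}. Throughout I may assume without loss of generality that every vertex is reachable from $v_0$, by restricting the game to the subgraph induced on vertices reachable from $v_0$; under this reduction, ``some terminal is reachable from $v_0$'' simply means the restricted game has at least one terminal, so the furthermore clause follows from showing every such restricted game admits a NE with terminal outcome.

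For $|T|\le 1$ the argument is direct: either there are no terminals, in which case every situation yields $\infty$ and is trivially a NE, or the unique terminal $a$ is reachable. In the latter case I define the situation by having each player controlling a vertex $v$ of the attractor $W_a$ select an arc of strictly smaller shortest-path distance to $a$. The walk from $v_0$ then reaches $a$, and since $a$ is preferred to the only alternative $\infty$ by every player in a terminal game, no deviation can improve the outcome, yielding a terminal NE.

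For $|T|=2$ with $T=\{a,b\}$ I partition $I=I_a\cup I_b$ by each player's preference between $a$ and $b$ and merge the teams into a two-person DG game $\Gamma'$, with player $A$ controlling $\bigcup_{i\in I_a}V_i$ and preferring $a>b>\infty$ (symmetrically for $B$); this merge is well-defined because $|T|=2$ forces every member of each team to share the same ranking. Applying \cite{BG03} to $\Gamma'$ yields a NE $\sigma^*$, and since any individual deviation in the original game is a restricted team deviation in $\Gamma'$, $\sigma^*$ is also a NE in the original game. For the terminal strengthening I would prove as a preliminary lemma that every two-person terminal DG game with a reachable terminal has a NE with terminal outcome, by starting from $\sigma^*$ and, if its outcome is $\infty$, iteratively redirecting an arc on the walk's cycle toward an attractor of some terminal.

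For $|T|=3$, after the reachability reduction I may assume all three terminals are reachable. Partitioning players by top terminal as $I=I_a\cup I_b\cup I_c$, if only one or two of these classes are nonempty and all players share consistent rankings of the unused terminal, the preceding reductions apply verbatim. The hard case is when either all three classes are nonempty, or when the two-nonempty-class subcase produces inconsistent merged team preferences (e.g.\ $I_c=\emptyset$ but some player in $I_a$ ranks $c>b$ while another ranks $b>c$). Here my plan is to build, for each terminal $t$, a greedy situation $\sigma^t$ driving the walk from $v_0$ to $t$ via a shortest-path strategy on $W_t$, and to argue that at least one of $\sigma^a,\sigma^b,\sigma^c$ is a NE. The principal obstacle is this final step --- ruling out the simultaneous failure of all three greedy attempts --- which I expect to tackle by chaining the improving deviations obtained from each failure into a structural contradiction that exploits the fact that with only three terminals each player's preference chain above $\infty$ has length at most three, so the resulting directed configuration on the three terminals is tightly constrained. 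This is also the only place where the hypothesis $|T|\le 3$ is genuinely used, consistent with the authors' inability to extend the theorem to larger $|T|$.
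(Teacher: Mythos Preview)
Your proposal does not close the $|T|=3$ case, which is the actual content of the theorem beyond what the paper already attributes to \cite{BG03}. For $|T|=3$ you construct shortest-path situations $\sigma^a,\sigma^b,\sigma^c$ and assert that at least one must be a NE, but you provide no mechanism forcing this; ``chaining the improving deviations \dots\ into a structural contradiction'' is a hope, not an argument, and you explicitly flag it as the principal unresolved obstacle. Nothing in your setup controls what the deviating player reaches after leaving the shortest-path tree to $t$, so there is no evident reason the three failures should be mutually inconsistent. (Your $|T|=2$ terminal-NE lemma is also only sketched: iteratively redirecting an arc on the cycle toward an attractor need not preserve the NE property, since a single redirection can create a new cycle or open a profitable deviation for the other team.)

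The paper's proof is entirely different and does not attempt to build a NE directly. It takes a minimal counter-example $\Gamma$ and shows that in such a $\Gamma$ every non-terminal vertex has out-degree at least $2$ and at least one non-terminal move, and---crucially---that no terminal move can be a $\#1$- or $\#3$-move for its controller, so every terminal move is a $\#2$-move. This last fact is where $|T|=3$ is genuinely used. One then picks any terminal move $(u,b)$, deletes the non-terminal moves out of $u$ to get a strictly smaller game $\Gamma'$ with a terminal NE $\sigma'$, and lifts $\sigma'$ back to $\Gamma$: if $g(\sigma')\ne b$ the lift is immediate; if $g(\sigma')=b$ one first applies a switching lemma that replaces every terminal move to $T\setminus\{b\}$ by a non-terminal move while preserving the NE property, after which no deviation in $\Gamma$ can reach a terminal other than $b$ without using a $\#1$-move, which does not exist. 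The reduction to $\#2$-moves and the switching lemma are the ideas your plan is missing.
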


\begin{theorem}\label{t-playonce}
    A play-once DG game has a NE.
\end{theorem}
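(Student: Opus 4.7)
The plan is to proceed by induction on $n = |V \setminus T|$, which in a play-once game coincides with the number of players, since each non-terminal position is controlled by exactly one player. The base case $n = 1$ is immediate: the sole player sits at $v_0 = v_1$ and selects their most preferred out-arc, whose outcome is either a terminal neighbour or $\infty$ (the latter obtained via a self-loop, should one exist, since then the walk cycles at $v_1$).

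For the inductive step, I would fix $v_0 = v_1$, controlled by player~$1$, and enumerate player~$1$'s out-arcs $e = (v_1, w)$. If $w \in T$ the choice deterministically produces outcome $w$, and if $w = v_1$ it produces $\infty$. Otherwise $w$ is a non-terminal distinct from $v_1$, and I would form a reduced game $G_e$ by contracting $v_1$ onto $w$: delete $v_1$, set the initial position to $w$, and redirect every arc $(u, v_1)$ with $u \neq v_1$ into $(u, w)$. The result is a play-once DG game with $n - 1$ non-terminals, so by the inductive hypothesis it admits a NE $s_e$ with some outcome $o(e)$. Let $e^\ast$ be the arc maximising player~$1$'s preference over all $o(e)$ together with the deterministic terminal and self-loop outcomes, and propose $(e^\ast, s_{e^\ast})$ as a NE of $G$.

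The main technical obstacle is verifying that player~$1$ has no profitable deviation from $e^\ast$. The subtlety is that $s_{e^\ast}$ is a NE only inside $G_{e^\ast}$, where every arc into $v_1$ is redirected to $w^\ast$, the target of $e^\ast$; if player~$1$ instead plays $e' = (v_1, w')$ with $w' \neq w^\ast$, the walk in $G$ from $w'$ following $s_{e^\ast}$ may revisit $v_1$ and then take $e'$, producing either an infinite cycle (outcome $\infty$) or a terminal outcome different from $o(e')$. The crucial claim is that no such deviated outcome is strictly preferred by player~$1$ to $o(e^\ast)$. Establishing this will probably require strengthening the inductive hypothesis so that the NE returned for each $G_e$ also behaves well with respect to the alternative entry points $w' \neq w$ corresponding to other deviations of player~$1$; a natural formulation is a ``menu'' version asserting that for every choice of contracted target a compatible family of NE exists simultaneously.

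If the direct inductive construction proves too fragile, I would fall back on showing that play-once DG games are weakly acyclic under best-response dynamics: from any situation, some sequence of strict unilateral improvements must terminate in a NE. Every small example I worked out exhibits this behaviour, which hints at a lexicographic potential built from the common walk-prefix preserved by any unilateral improvement (namely, up to the position of the deviating player, whose position must lie on the walk for the deviation to matter). Making this potential precise, rather than carrying out the induction itself, is likely to be the deepest part of either strategy.
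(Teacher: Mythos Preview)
Your proposal is a plan with a self-acknowledged gap, not a proof, and the gap is genuine.  Consider the play-once game with non-terminals $v_1,v_2,v_3$ (controlled by players $1,2,3$), terminals $a,b,c$, moves $(v_1,v_2),(v_1,v_3),(v_2,a),(v_2,b),(v_3,c)$, initial position $v_0=v_1$, and preferences $b\succ_1 c\succ_1 a\succ_1\infty$ and $a\succ_2 b$.  There are no arcs into $v_1$, so both contractions are simply $G\setminus\{v_1\}$ with different initial positions.  For $e=(v_1,v_2)$ the unique NE of the subgame gives $o(e)=a$.  For $e=(v_1,v_3)$, the profile $s_e$ with $s_e(v_3)=c$ and $s_e(v_2)=b$ is a perfectly legal NE of the subgame (since $v_2$ is off the play), and $o(e)=c$.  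Since $c\succ_1 a$ you pick $e^\ast=(v_1,v_3)$, and your proposed situation $(v_1\to v_3,\,v_2\to b,\,v_3\to c)$ has outcome $c$; but player~$1$ deviates to $v_2$ and obtains $b\succ_1 c$.  So the bare inductive hypothesis ``the smaller game has some NE'' is too weak: the off-play part of $s_{e^\ast}$ is unconstrained and can hand player~$1$ a profitable deviation.  Your suggested ``menu'' strengthening would have to pin down the off-play moves of every $s_e$ simultaneously, and you have not indicated how to do that; it is not clear this can be done inductively at all.

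Your fallback is also only a hope.  Play-once DG games are known to admit improvement cycles (see \cite{AGH10}), so no ordinal potential exists; weak acyclicity is not ruled out, but your ``common-prefix'' heuristic is far from a proof, and you yourself flag it as the hardest part.

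The paper's argument is entirely different and non-inductive.  It first normalises so that each position has at most one terminal move and at least one non-terminal move, then partitions the non-terminals: $S$ is the set of positions $v$ whose terminal move leads to some $a(v)$ with $a(v)\succ_{i(v)}\infty$, and $W$ is the set of positions reachable from $v_0$ inside $G[V\setminus S]$.  If $G[W]$ contains a directed cycle, a non-terminal NE is built directly (route the play into that cycle; no player on the play can profitably leave, by the definition of $S$).  If $G[W]$ is acyclic, the paper removes a minimal feedback arc set $F$ consisting only of arcs leaving $S$, applies backward induction to the resulting acyclic graph, and then repairs the situation on the deleted arcs and on an auxiliary set $Q$ so that every possible deviation by a player on the play is dominated --- using crucially that any arc of $F$ leaves a vertex of $S$ and that the minimality of $F$ forces a return path, so such deviations yield $\infty$, which the deviating player ranks below $a(v)$.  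No induction on $|V\setminus T|$ is used.
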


In Section \ref{s2} we provide further terminology and notation, 
in Section \ref{s3} we show some simple claims, 
in Section \ref{s4} we analyze terminal games and prove Theorem \ref{t-3T}, 
and in Section \ref{s5} we consider play-once games and 
prove Theorem \ref{t-playonce}. Finally, in Section \ref{s6} we pose some open questions.

\section{Some Concepts, Definitions, and Notation}
\label{s2} 

\noindent\textbf{Graphs modeling  DG games}: 
We denote by $I=\{1, \dots , n\}$  the set of players.  
Let $G=(V,E)$ be a directed graph, 
where vertices and directed edges are respectively called {\em positions} and {\em moves}. 
The vertex set $V$ is partitioned by 
$V = T \cup \bigcup_{i \in I}V_i$, 
where $V_i$ i s the set of positions controlled
by player $i \in I$, 
while $T$ is the set of terminals, i.e., positions with no moves from them. 
For a player $i\in I$ we denote by $E_i\subseteq E$ the moves \emph{controlled by player} $i$, that is, $E_i=\{(v,w)\in E\mid v\in V_i\}$.
For a position $u\in V\setminus T$ we denote by $i(u)\in I$ the player who controls that position, that is $u\in V_{i(u)}$.
We have a designated position $v_0 \in V\setminus T$, called the \emph{initial position}. 

Given a subset $F \subseteq E$ and a vertex $v \in V$ 
we denote by $N_F^+(v)$ the out-neighborhood of $v$ 
and by $d_F^+(v)=|N_F^+(v)|$  its cardinality, so-called out-degree.   
By definition, we have $d_E^+(t)=0$ for all terminals $t\in T$ and  $d_E^+(v)>0$ for all positions $v\in V\setminus T$. 
For a subset $X\subseteq V$ of the positions, 
we denote by $G[X]=(X,E[X])$ the subgraph of $G$ induced by $X$. 


\medskip

\noindent\textbf{Strategies and situations of DG games}: 
A \emph{strategy} of a player $i\in I$ is a mapping $\gs_i:V_i\to E_i$ 
that assigns the move $\gs_i(v)\in \{(v,w) \in E_i\}$ 
for each position  $v\in V_i$. 
Note that $\gs_i(v)$ depends only on the position $v \in V_i$, and hence, the strategy is called  \emph{positional} or
\emph{stationary} in the literature. 
We denote by $\Sigma_i$ the set of strategies of player $i\in I$, and note that
$|\Sigma_i| = \prod_{v\in V_i} d^+_E(v)$.

For simplicity, we also view $\gs_i$ as a subset of the moves of $G$: $\gs_i=\{(v,\gs_i(v))\mid v\in V_i\}\subseteq E_i$ for $i\in I$. In other words, a strategy $\gs_i\subseteq E$ satisfies  $d^+_{\gs_i}(v)=1$ for all $v\in V_i$ and $0$ for all $v\in V\setminus V_i$. 
A \emph{situation} is a collection $\gs=(\gs_i\in \Sigma_i \mid i\in I)$ of strategies, one for each player. We can also view a situation as a subset of moves $\gs\subseteq E$ such that $d^+_{\gs}(v)=1$ for all positions $v\in V\setminus T$ and $0$ for all terminals $v\in T$. 

\medskip
\noindent\textbf{Mechanics and outcomes of DG games}: 
Each player chooses a strategy. 
In the corresponding situation $\gs$, for any position $v \in V \setminus T$,  
there exists a unique walk $P(\gs,v)$ leaving  $v$. 
Such a walk terminates either in a terminal, 
or in a cycle that is then repeated infinitely many times. 
We define
\[
g(\gs,v) ~=~ 
\begin{cases}
	a & \text{ if the path } P(\gs,v) \text{ is finite that terminates }  a\in T, \\
	\infty & \text{ if the path } P(\gs,v) \text{ is infinite.}
\end{cases}
\]
We call $P(\gs)=P(\gs, v_0)$ the \emph{play} of the situation and denote by $g(\gs)=g(\gs, v_0)$ the corresponding outcome, which we call the \emph{outcome} of the situation. 
Accordingly, in \emph{DG games} 
we define the set of outcomes 
as the set of terminals and one non-terminal outcome, 
$A = T \cup \{\infty\}$, and call the play $P(\gs)$ \emph{terminal or finite} 
if $g(\gs)\in T$, and \emph{infinite} otherwise. 

Given a situation $\gs$, for a player $i\in I$ we denote by 
$\gs_{-i}=\bigcup_{j\in I\setminus\{i\}}\gs_j$ 
the set of moves of $\gs$ controlled by players from $I \setminus \{i\}$. 
Thus, for every $i\in I$ we have $\gs=\gs_{-i}\cup\gs_i$. 

\medskip

\noindent\textbf{Preferences and purpose of  DG games}: 
Each player $i$ has a preference over the outcomes  
that is described by a linear order $\succ_i$ over the set  $A$. 
We denote by $\succ$ 
 the collection $\{\succ_i \,\,\,\mid i\in I\}$ of preference orders, and denote by $\Gamma=\Gamma(G=(V=T\cup \bigcup_{i \in I}V_i,E),v_0\in V\setminus T,\succ)$ the corresponding game. 

 Furthermore, for a subset $X\subseteq V$ such that $T \cup \{v_0\} \subseteq X$ we denote by $\Gamma[X]=(G[X]=(X,E[X]), v_0,\succ)$ the subgame of $\Gamma$ induced by the set $X$. 
If ambiguity may arise, we write $\Sigma_i(\Gamma)$ for the set of strategies of player $i\in I$ in game $\Gamma$.

\medskip

\noindent\textbf{Nash equilibria in DG games}: 
We say that a situation $\gs$ is a \emph{Nash equilibrium} (or NE, in short), if
\[
g(\gs) ~\succeq_i ~ g(\gs_{-i}\cup\gs'_i)
\]
for all players $i\in I$ and for all strategies $\gs'_i\in \Sigma_i$. In other words, a situation $\gs$ is a NE if none of the players can strictly improve (according to his/her own preferences) if all the other players stick to their chosen strategies. 
We say that a NE $\gs$ is {\em terminal} if  $g(\gs)\in T$ and {\em non-terminal} otherwise. 

Since we consider NE of DG games,  
we restrict our attention to linear orders as preferences, rather than orders with ties or payoffs.
Obviously, tie-breaking might destroy NE, but cannot create one. 
In contrast, merging outcomes might only create a NE  but cannot destroy one. 

\medskip

\noindent\textbf{Backward Induction}: 
If the digraph $G$ is acyclic then 
a special NE can be obtained in a DG game  
$G, D, \succ)$  by the well-known 
Backward Induction (BI). 
This procedure was introduced in  1953 by Gale and Kuhn  \cite{Gal53;Kuh53}. 
More properties of the BI NE can be found in \cite{Gur17,Gur18}.

\medskip

\section{Some simple proofs and examples}
\label{s3}

Let us first consider an easy special case.

\begin{lemma}\label{l-notreachable}
Let $\Gamma=(G=(V,E),v_0,\succ)$ be a DG game in which the set of terminals is not reachable from the initial position. Then $\Gamma$ has a non-terminal NE. 
\end{lemma}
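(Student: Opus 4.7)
The plan is to show that every situation $\sigma$ in $\Gamma$ yields outcome $g(\sigma)=\infty$, which would immediately make every situation a non-terminal NE, since no unilateral deviation could even change the outcome.

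First I would pick any situation $\sigma$; such a $\sigma$ exists because every non-terminal position has out-degree at least one in $G$, so each player can select one outgoing move from every position he controls. The play $P(\sigma)=P(\sigma,v_0)$ is a walk starting at $v_0$ whose edges lie in $\sigma\subseteq E$, so it visits only positions reachable from $v_0$ in $G$. By hypothesis no terminal lies in this reachable set, so the walk never meets a terminal. Every position it visits therefore has an outgoing edge in $\sigma$, whence the walk is infinite; being an infinite walk in the finite digraph $G$, it must eventually enter a directed cycle. Thus $g(\sigma)=\infty$.

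Next I would observe that the very same argument applies to every situation $\sigma'=\sigma_{-i}\cup\sigma'_i$ obtained from $\sigma$ by a unilateral deviation of some player $i\in I$: the walk starting at $v_0$ still stays within the set of $G$-reachable positions from $v_0$, which avoids $T$, so $g(\sigma')=\infty$ as well. Consequently $g(\sigma)\succeq_i g(\sigma')$ holds trivially for every $i\in I$ and every $\sigma'_i\in\Sigma_i$, so $\sigma$ is a NE, and it is non-terminal by definition.

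There is essentially no obstacle here; the only subtle point to check is that the set of positions reachable from $v_0$ depends only on the underlying graph $G$ and not on the particular situation chosen, so the hypothesis that terminals are unreachable from $v_0$ automatically propagates to every situation and every unilateral deviation.
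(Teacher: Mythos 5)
Your proof is correct and rests on the same observation as the paper's: the set of positions reachable from $v_0$ avoids $T$ and is closed under all moves, so every play (including after any unilateral deviation) is infinite. The only difference is cosmetic—you conclude that \emph{every} situation is a non-terminal NE, whereas the paper constructs one specific situation by choosing non-terminating moves on the reachable set; your version even spares that choice, since no terminating moves exist there anyway.
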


\begin{proof}
    Let us denote by $S$ the set of positions that can be reached from $v_0$ by a directed path. According to our assumption, we have $S\cap T=\emptyset$. 

    Let us  define a situation $\sigma$ by choosing a non-terminating move from each position $v\in S$, and an arbitrary move from positions $v\in V\setminus (S\cup T)$. 
    Note that a  terminal DG game has at least one move from all positions in $V\setminus T$, by definition.

    Then the play $P(\sigma)$ is not leaving the set $S$, and hence must terminate in a cycle within $S$. Furthermore, all players on this play belong to $S$, and hence none of them can switch to a strategy that would leave $S$. Consequently, $\sigma$ is a NE with $g(\sigma)=\infty$, as claimed.
\end{proof}

\medskip

Let us show next a simple proof for the result that we recalled earlier, that is that if a DG game is both terminal and play-once, then it has a NE. 
This statement follows from the results of \cite{BG03}. 
Here we give a simpler direct proof. 

\begin{proposition}
A terminal and play-once DG game has an NE.
\end{proposition}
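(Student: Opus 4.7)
My plan is to proceed by induction on the number $n$ of players, leveraging Lemma~\ref{l-notreachable} to handle the trivial reachability case. For the base case $n=1$, the sole player controls $v_0$ and every other position is a terminal (since $|V\setminus T|=n=1$). The player then picks the move to their most preferred directly reachable terminal, giving a Nash equilibrium immediately. Since the game is terminal and $v_0\notin T$, this move exists and yields a terminal outcome, so no deviation can be improving.

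For the inductive step with $n\geq 2$, I would first apply Lemma~\ref{l-notreachable} when no terminal is reachable from $v_0$. Otherwise, let $i_0$ denote the player controlling $v_0$. For each move $e=(v_0,w)\in E_{i_0}$, consider the reduced game $\Gamma_e$ obtained by committing $i_0$ to $e$ and taking $w$ as the new initial position; this is effectively an $(n-1)$-player play-once terminal game (with $u_{i_0}$ either deleted or having its only outgoing edge equal to $e$). By the induction hypothesis (or Lemma~\ref{l-notreachable} applied inside $\Gamma_e$), each $\Gamma_e$ admits a NE producing some outcome $a(e)\in T\cup\{\infty\}$. Let $e^*=(v_0,w^*)$ be a move that maximizes $a(e)$ with respect to $\succ_{i_0}$. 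The candidate NE $\sigma^*$ for $\Gamma$ is formed by combining $i_0$'s choice of $e^*$ with the NE strategies from $\Gamma_{e^*}$.

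Verifying that $\sigma^*$ is indeed a NE splits into two observations. Deviations by any player $j\neq i_0$ leave $i_0$'s choice of $e^*$ intact, so the play from $v_0$ still enters $w^*$ first and proceeds as in $\Gamma_{e^*}$; hence such a deviation corresponds to a deviation in the NE of $\Gamma_{e^*}$ and cannot be improving. The main obstacle, and the technical heart of the argument, lies in the other case: if $i_0$ switches to some $e'=(v_0,w')$ with $w'\neq w^*$, the play proceeds from $w'$ under the strategies $\sigma^*_{-i_0}$ taken from $\Gamma_{e^*}$'s NE, which in general differ from the NE strategies of $\Gamma_{e'}$. The resulting outcome $b$ thus need not equal $a(e')$, and it is conceivable that $b\succ_{i_0} a(e^*)$, which would break the argument.

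To overcome this obstacle, I would choose the NE strategies in each $\Gamma_e$ consistently, for instance picking in each subgame the NE that is best for $i_0$ (or more refined, a lexicographically best one). The play-once structure helps here: when $i_0$ deviates, only the moves of players whose unique positions are actually visited by the new play matter, and the set of such positions is restricted by the fixed strategies $\sigma^*_{-i_0}$. By comparing the deviated play to an appropriate play realizable in $\Gamma_{e'}$, one should be able to conclude $b\preceq_{i_0} a(e')\preceq_{i_0} a(e^*)$, completing the verification. If this comparison fails cleanly, a fallback plan is to replace the induction on $n$ with an induction on the total number of moves $|E|$, deleting a carefully chosen (e.g., dominated) move of $i_0$ and iterating the argument.
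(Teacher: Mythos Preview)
Your proposal has a genuine gap at exactly the spot you yourself flag, and you do not close it. When $i_0$ switches from $e^*$ to $e'=(v_0,w')$, the remaining players still play $\sigma^*_{-i_0}$, which was selected as (part of) a NE of $\Gamma_{e^*}$, not of $\Gamma_{e'}$. The resulting outcome $b$ is the value of the non-equilibrium profile $(\sigma^*_{-i_0},e')$ in $\Gamma_{e'}$, and there is no a~priori inequality tying it to $a(e')$: choosing in each $\Gamma_e$ a NE that is \emph{best for $i_0$} only controls NE outcomes, not outcomes of arbitrary profiles. Your text then asserts that ``one should be able to conclude $b\preceq_{i_0} a(e')$'', but no argument is given, and the fallback (delete a dominated move of $i_0$ and induct on $|E|$) is likewise only a sketch. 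There is also a secondary issue with the induction parameter: when $v_0$ is reachable from $w$, the reduced game $\Gamma_e$ is not an $(n{-}1)$-player play-once game; $v_0$ remains a position (now with out-degree $1$), so $n$ does not decrease and your induction as stated does not terminate.

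The paper avoids all of this with a short direct construction. After using Lemma~\ref{l-notreachable} to assume $T$ is reachable from $v_0$, and after the easy reduction making every non-terminal position have at least one non-terminating move, it takes a \emph{shortest} $v_0\to T$ path $P$, with the last vertex $v$ on $P$ choosing its favourite adjacent terminal. The situation is: follow $P$ on $P$, and take a non-terminating move everywhere off $P$. Because $P$ is shortest, no vertex of $P$ other than $v$ has a direct move into $T$; hence any unilateral deviation by a player on $P$ other than $v$ leads (via non-terminating off-$P$ moves) to $\infty$, which is worst for everyone in a terminal game, while $v$ already sits at its best reachable terminal. The play-once hypothesis is exactly what makes ``one player $=$ one position on $P$'' deviations exhaust all cases. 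This two-line argument replaces your entire inductive machinery.
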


\begin{proof}
Assume that $\Gamma=(G=(V,E),v_0,\succ)$ is a terminal and play-once DG game in which the a terminal position is reachable from the initial position. Due to Lemma \ref{l-notreachable} we can assume this, since otherwise $\Gamma$ has a non-terminal NE.

We can also assume that all non-terminal positions of $\Gamma$ have a non-terminating move, since otherwise we could fix a best terminating move for such a position and reduce the game to a smaller, equivalent one. 

Let us then consider a shortest $v_0\to T$ path $P$ (that has the minimum  number of arcs) such that the last position $v$ on $P$ before $T$ chooses a best terminal move $(v,a)$ for $a\in N^+(v)\cap T$ according to his preferences.  We can then define a situation $\sigma$ by using the moves along $P$ for positions on $P$, and choosing a non-terminating move for all other non-terminal positions. 

We claim that $\sigma$ is a terminal NE. Note that a player $u\neq v$ on the play $P(\sigma)=P$ cannot reach $T$ by a single move, since then $P$ would not be a shortest path. Thus, all deviations from $\sigma$ by players different from $v$ on the play yield $\infty$, while a deviation by $v$ may yield another terminal that $v$ does not prefer to $g(\sigma)$. Hence $\sigma$ is a NE, as claimed.
\end{proof}

Let us recall a simple example from \cite{BGMOV18} for a NE-free $3$-terminal DG game, which is not terminal. 

\begin{figure}[ht]
	\begin{center}
		\begin{tikzpicture}[scale=0.25,fill opacity=0.5]
			\node[rectangle,draw,fill=blue,opacity=0.5] (v1) at (0,10) {$1$};
			\node[circle,draw,fill=red,opacity=0.5] (v2) at (10,10) {$2$};
                \node[circle,draw,fill=red,opacity=0.5] (v3) at (10,0) {$2$};
			\node[circle,draw,fill=green,opacity=0.5] (v4) at (0,0) {$3$};
			
			\node[circle,draw,fill=white] (a) at (15,10) {$a$};
			\node[circle,draw,fill=white] (b) at (15,-5) {$b$};
			\node[circle,draw,fill=white] (c) at (0,-5) {$c$};
			\draw[thick,->] (v1) to (v2);
			\draw[thick,->] (v2) to (a);
			\draw[thick,->] (v2) to (v3);
			\draw[thick,->] (v3) to (b);
			\draw[thick,->] (v3) to[bend left] (v4);
                \draw[thick,->] (v4) to[bend left] (v3);
                \draw[thick,->] (v1) to (v4);                
                \draw[thick,->] (v4) to (c);

                \node[align=right] at (25,6) {$b\succ_1 \infty\succ_1 a\succ_1 c$};
                \node[align=right] at (25,3) {$c\succ_2 a\succ_2 b\succ_2 \infty$};
                \node[align=right] at (25,0) {$a\succ_3 \infty\succ_3 c\succ_3 b$};		\end{tikzpicture}
	\end{center}
	\caption{A 3-terminal example with three players $I=\{1,2,3\}$ and terminals $T=\{a,b,c\}$. Player $2$ controls two positions, while $1$ and $3$ controls one-one positions. The initial position $v_0$ is the one controlled by player $1$. The player's preferences are shown next to the figure.    This example has no NE.  This example is not a terminal DG game, since two of the players rank $\infty$ higher than two-two of the terminal outcomes, and it is not play-once either, since player $2$ controls two positions.
 \label{fig-exMartin-Vladimir}}
\end{figure}
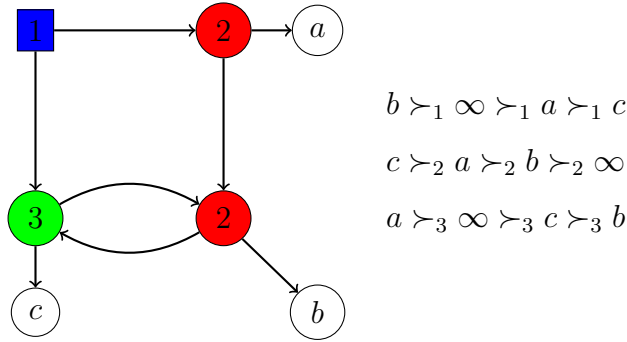


\section{Terminal DG games} 
\label{s4}


For $|T| \leq 2$ this result was obtained in \cite{BG03}. 
Due to Lemma \ref{l-notreachable}, we can assume 
for the rest of our proof that $T$ is reachable from $v_0$ in $\Gamma$.

Our remaining proof is inductive. We consider a potential ``smallest'' counter-example (smallest number of positions, and within that smallest number of moves) and derive a contradiction from its existence. To arrive to such a proof we need a few technical lemmas. 

We say that a move $(u,a)\in E$, $a\in T$  is a \emph{$\#k$-move} if terminal $a$ is the $k$th best outcome of player $i(u)\in I$. 
We show that the following properties hold for potential minimal counter-examples:

\begin{description}
	\item[(NoDE)] No dead end: $T$ is reachable from all positions $v\in V\setminus T$. 
	\item[(NoDummy)] No dummies: $d^+(v)\geq 2$ for all $v\in V\setminus T$. 
	\item[(NofT)] No forced termination: $N^+(v)\not\subseteq T$ holds for all $v\in V\setminus T$.
	\item[(No1M)] $\Gamma$ has no $\#1$-moves. 
	\item[(No3M)] $\Gamma$ has no $\#3$-moves. 
\end{description}

\begin{lemma}\label{l-no-deadend}
	If $\Gamma$ is a minimal counter-example, then the set of terminals is reachable from all positions $v\in V\setminus T$. In particular,  it satisfies condition (NoDE).
\end{lemma}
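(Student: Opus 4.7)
The plan is to argue by contradiction, exploiting the minimality of $\Gamma$. Assume there is some $u \in V \setminus T$ from which no terminal is reachable, and let $S = \{v \in V : \text{no terminal is reachable from } v\}$. Then $S \neq \emptyset$, $S \cap T = \emptyset$, and every arc $(v,w) \in E$ with $v \in S$ must satisfy $w \in S$: otherwise $T$ would be reachable from $v$ through $w$. By Lemma \ref{l-notreachable} we may restrict attention to the case where $T$ is reachable from $v_0$, so $v_0 \in V \setminus S$.

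I would then pass to the induced subgame $\Gamma' = \Gamma[V \setminus S]$ and invoke the minimality assumption. First I must check that $\Gamma'$ is a well-defined terminal DG game: every non-terminal $v \in (V\setminus S)\setminus T$ has an outgoing move in $\Gamma'$, because any $v \to T$ path in $\Gamma$ must begin with an arc to a vertex from which $T$ is still reachable, i.e., to a vertex outside $S$. The terminal set, initial position, and preferences of $\Gamma'$ coincide with those of $\Gamma$, so $\Gamma'$ is a terminal DG game with $|T| \leq 3$ on strictly fewer positions. By minimality, $\Gamma'$ therefore admits a NE $\sigma'$.

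The remaining step is to extend $\sigma'$ to a situation $\sigma$ of $\Gamma$ by assigning arbitrary moves at positions of $S$, and to verify that $\sigma$ is a NE of $\Gamma$ --- contradicting the choice of $\Gamma$. Since $\sigma'$ never sends a vertex of $V\setminus S$ into $S$, the play $P(\sigma, v_0)$ coincides with $P(\sigma', v_0)$ and $g(\sigma) = g(\sigma')$. For any deviation $\sigma_i''$ of player $i$, the resulting play from $v_0$ either enters $S$ at some step (in which case it is trapped in $S$ and yields $\infty$, which cannot strictly improve $g(\sigma)$ because $\infty$ is the worst outcome for every player in a terminal game), or the play stays entirely inside $V\setminus S$ (in which case the deviation should induce an analogous deviation in $\Gamma'$, contradicting that $\sigma'$ is a NE there).

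The only genuinely delicate point I foresee is the last case: $\sigma_i''$ may prescribe moves into $S$ at positions of $V_i \cap (V \setminus S)$ that are \emph{not} visited by the play, so it does not automatically restrict to a valid strategy in $\Gamma'$. I would resolve this by defining $\hat\sigma_i(v) = \sigma_i''(v)$ whenever $\sigma_i''(v) \in V \setminus S$, and replacing it by an arbitrary within-$(V\setminus S)$ move whenever $\sigma_i''(v) \in S$ (such a replacement exists by the argument in the second paragraph). Because the play under consideration never leaves $V\setminus S$, the replacements occur only at positions off the play, so $\hat\sigma_i$ produces exactly the same play and outcome in $\Gamma'$ as $\sigma_i''$ does in $\Gamma$, yielding the promised contradiction and establishing (NoDE).
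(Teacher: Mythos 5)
Your construction is the same as the paper's (your $S$ is its $Q$; restrict to $\Gamma[V\setminus S]$, extend an equilibrium by arbitrary moves on $S$, and observe that any deviation either stays in $V\setminus S$, where it reduces to a deviation in the subgame, or enters $S$ and yields $\infty$), and your care about deviations that prescribe off-play moves into $S$ is a legitimate point that the paper treats only implicitly. The gap is in the final contradiction: you invoke minimality only to get ``a NE'' $\sigma'$ of $\Gamma'$, and hence you exhibit only ``a NE'' of $\Gamma$. But Theorem \ref{t-3T} has two clauses, and under the standing assumption (via Lemma \ref{l-notreachable}) that $T$ is reachable from $v_0$, a minimal counter-example is a game with \emph{no terminal NE}; such a game may perfectly well possess a non-terminal NE, so producing some NE of $\Gamma$ does not contradict its choice. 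Concretely, if the $\sigma'$ you pick happens to be a non-terminal equilibrium of $\Gamma'$, your extension $\sigma$ has $g(\sigma)=\infty$ and nothing is contradicted. This matters also downstream: the later lemmas and the completion of the proof of Theorem \ref{t-3T} all rely on smaller games supplying \emph{terminal} equilibria.

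The repair is short and you have all the ingredients. Since $v_0\notin S$, fix any $v_0\to T$ path in $G$; every vertex on it can reach $T$ along the path's suffix, so the whole path avoids $S$ and lies in $\Gamma'=\Gamma[V\setminus S]$. Hence $T$ is reachable from $v_0$ in $\Gamma'$, and by minimality the \emph{furthermore} clause of Theorem \ref{t-3T} applies to the strictly smaller game $\Gamma'$, giving a terminal NE $\sigma'$ with $g(\sigma')\in T$. Your extension then satisfies $g(\sigma)=g(\sigma')\in T$, and your deviation analysis (deviations entering $S$ give $\infty$, which no player prefers to a terminal; deviations staying in $V\setminus S$ are simulated in $\Gamma'$ after your off-play adjustment) shows $\sigma$ is a \emph{terminal} NE of $\Gamma$, which is the contradiction actually needed. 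With that one adjustment your argument coincides with the paper's proof.
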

\begin{proof}
	Let us define $Q\subseteq V$ as the set of positions of $\Gamma$ from which $T$ cannot be reached, and assume for a contradiction that $Q\neq \emptyset$. Note that we have $v_0\in V\setminus Q$, since we assume that the set of terminals are reachable from the initial position. Then $\Gamma'=\Gamma[V\setminus Q]$ is a smaller game than $\Gamma$ terminal DG game.  Thus, it has a terminal NE $\sigma'=(\sigma'_i\in \Sigma_i(\Gamma')\mid i\in I)$ by the minimality of $\Gamma$. Let us then extend these strategies to $\Gamma$  by defining for all $v\in V_i$, $i\in I$ 
	\[
	\sigma_i(v) ~=~ 
	\begin{cases}
		\sigma'_i(v) & \text{ if } v\in V_i\setminus Q,\\
		w & \text{ for an arbitrary } w\in N^+(v) \text{ if } v\in V_i\cap Q.
	\end{cases}
	\]
	We claim that $\sigma$ is a NE of $\Gamma$, contradicting the assumption that $\Gamma$ is a counter-example to Theorem \ref{t-3T}. 
	
	To see this claim, let us note that $P(\sigma)=P(\sigma')$ and hence $g(\sigma)=g(\sigma')\in T$. Furthermore, a player $i\in I$ can improve on $\sigma$ only if he can deviate from $P(\sigma)$ to a position of $Q$, since all other positions have the same moves as in $\sigma'$, which is an equilibrium in $\Gamma'$. Such a deviation however yields a play that is not final by the definition of $Q$, and hence cannot be a strict improvement for player $i$ since all players prefer all terminals to $\infty$ in a terminal DG game.
\end{proof}

\begin{lemma}\label{l-no-dummies}
	If $\Gamma$ is a minimal counter-example, then it satisfies condition (NoDummy).
\end{lemma}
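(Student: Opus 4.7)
My plan is to argue by contradiction: if the minimal counter-example $\Gamma$ contains a position $v\in V\setminus T$ with $d^+(v)=1$ and unique out-arc $(v,w)$, I will build a strictly smaller game $\Gamma'$ on the vertex set $V\setminus\{v\}$ by deleting $v$ and redirecting every arc $(u,v)\in E$ to $(u,w)$, identifying any parallel copies of $(u,w)$. This preserves all terminals, all preference orders, and the property that every non-terminal has an outgoing arc, so $\Gamma'$ is again a terminal DG game with $|T|\le 3$. By the minimality of $\Gamma$, the game $\Gamma'$ has a NE $\sigma'$, and I will lift $\sigma'$ to a NE of $\Gamma$, contradicting the assumption.

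Before the reduction I need two preliminaries. First, by Lemma~\ref{l-no-deadend} the set $T$ is reachable from $v$, so the unique move cannot be a self-loop; in particular $w\neq v$ and the reduction is well-defined. Second, I must handle the edge case where $v=v_0$: the new initial position of $\Gamma'$ would be $w$, which is required to be non-terminal. If in fact $v=v_0$ and $w\in T$, then in $\Gamma$ every play starts with the forced move $v_0\to w$ and terminates at $w$, so \emph{every} situation has outcome $w$ and is trivially a NE, contradicting that $\Gamma$ has no NE. Thus we may assume $w\notin T$ whenever $v=v_0$, and $\Gamma'$ is a legitimate smaller instance.

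To lift $\sigma'$ to a strategy profile $\sigma$ in $\Gamma$, I set $\sigma(v)=(v,w)$ (the only option), and for $u\in V\setminus\{v\}$ I take $\sigma(u)=\sigma'(u)$, with one caveat: if $\sigma'(u)=(u,w)$ and the arc $(u,w)$ did not exist in $\Gamma$ (so only $(u,v)$ did), I set $\sigma(u)=(u,v)$. The play $P(\sigma)$ then matches $P(\sigma')$ modulo inserting the deterministic step $v\to w$ whenever the play visits $v$, so $g(\sigma)=g(\sigma')$. For any deviation $\tau_i\in\Sigma_i(\Gamma)$ I define $\tau'_i\in\Sigma_i(\Gamma')$ by replacing any move of the form $(u,v)$ by $(u,w)$; the resulting plays again yield identical outcomes, so $g(\sigma_{-i}\cup\tau_i)=g(\sigma'_{-i}\cup\tau'_i)\preceq_i g(\sigma')=g(\sigma)$ because $\sigma'$ is a NE of $\Gamma'$. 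Hence $\sigma$ is a NE of $\Gamma$, the desired contradiction.

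The only delicate step is the arc-identification in the reduction: redirecting $(u,v)$ to $(u,w)$ may coincide with an existing arc $(u,w)$, which shrinks $d^+(u)$ by one and slightly changes player $i(u)$'s strategy set. The careful bookkeeping in the lifting rule above (choosing between $(u,v)$ and $(u,w)$ based on what existed in $\Gamma$) is precisely what absorbs this identification and shows that every strategy in $\Gamma$ has an outcome-equivalent counterpart in $\Gamma'$ and vice versa; this is the main obstacle to verify rigorously, but it is otherwise a routine case analysis.
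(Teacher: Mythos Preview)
Your proof is correct. Both you and the paper argue by contracting the forced arc $(v,w)$ and invoking minimality, but you contract onto the opposite endpoint: the paper deletes $w$ and lets $v$ inherit $w$'s out-neighbourhood, whereas you delete the dummy $v$ itself and redirect all of its in-arcs to $w$. The underlying idea and the situation/play correspondence are the same. Your variant has the small advantage that no position changes owner (in the paper's contraction the merged vertex $v$ must tacitly be reassigned to player $i(w)$ for the one-to-one correspondence of situations to respect each player's strategy set), and you are more explicit about the corner cases $w=v$, $v=v_0$, and the collapse of parallel arcs $(u,v),(u,w)$. One very minor point: since in this section the minimal counter-example is assumed to have $T$ reachable from $v_0$, the NE $\sigma'$ you obtain in $\Gamma'$ is in fact terminal, so the lifted $\sigma$ is a terminal NE of $\Gamma$; you may want to say this explicitly to match the induction hypothesis used throughout the section.
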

\begin{proof}
	By Lemma \ref{l-no-deadend} we have $d^+(v)\geq 1$ for all positions $v\in V\setminus T$. Assume for a contradiction that $d^+(v)=1$ for some $v\in V_i$, $i\in I$. (Such a $v$ is called a dummy position sometimes since the player controlling  it has no real choice in selecting an outgoing move.) Let us denote by $(v,w)\in E$ the unique move from $v$, set $F=\{(v,w)\}\cup \{(w,u)\mid u\in N^+(w)\}$, and define 
	\[
	V'=V\setminus \{w\} \text{ and } E'=\left(E\setminus F\right)\cup \{(v,u)\mid u\in N^+(w)\}.
	\]
	Let us also define $v_0'=v_0$ if $v_0\neq w$, and set $v_0'=v$ if $v_0=w$. 
	Let us then consider the game $\Gamma'=(G'=(V',E'),v_0',\prec)$. 
	
	Note that there is a one-to-one correspondence between the situations of $\Gamma$ and $\Gamma'$. Namely, to a situation $\sigma'$ of $\Gamma'$ let us associate $\sigma$ defined by $\sigma(v)=w$, $\sigma(w)=\sigma'(v)$, and $\sigma(u)=\sigma'(u)$ for all $u\not\in \{v,w\}$ (and vice versa.) 
	
	Since $\Gamma'$ is a smaller terminal DG game than $\Gamma$, it has a terminal NE $\sigma'$, because we assumed that $\Gamma$ is a minimal counter example. It is easy to see that the unique associated situation $\sigma$ of $\Gamma$ is a terminal NE of $\Gamma$. This contradicts the assumption that $\Gamma$ is a counter-example to Theorem \ref{t-3T}, and hence completes the proof of the lemma.
\end{proof}

\begin{lemma}\label{l-no-forced-terminal}
	If $\Gamma$ is a minimal counter-example, then it satisfies condition (NofT).
\end{lemma}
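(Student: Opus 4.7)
My plan is to argue by contradiction, exhibiting a strictly smaller game (in the lexicographic order: fewer positions first, then fewer moves) whose terminal NE pulls back to one for $\Gamma$. Suppose some $v \in V_i$ violates (NofT), i.e.\ $N^+(v) \subseteq T$. By Lemma \ref{l-no-dummies}, $|N^+(v)| \geq 2$. Let $a \in N^+(v)$ be the terminal that is best for player $i$ with respect to $\succ_i$. Form $\Gamma'$ from $\Gamma$ by deleting every move $(v, a')$ with $a' \in N^+(v) \setminus \{a\}$. Then $\Gamma'$ has the same position set, initial position, and preferences as $\Gamma$, but strictly fewer moves; in $\Gamma'$, position $v$ has the unique outgoing arc $(v, a)$.

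I would first verify that $\Gamma'$ is a valid terminal DG game with $|T'| = |T| \leq 3$ in which terminals remain reachable from $v_0$: every non-terminal still has an outgoing move, the preferences are unchanged, and any $v_0 \to T$ path in $\Gamma$ either avoids $v$ or can use the surviving arc $(v, a)$. Hence $\Gamma'$ is not a counter-example, so by minimality of $\Gamma$ it admits a terminal NE $\sigma'$. Viewing $\sigma'$ as a situation $\sigma$ of $\Gamma$ (the same set of edges, with $\sigma_i(v) = a$ forced), the play $P(\sigma)$ uses identical moves to $P(\sigma')$, and so $g(\sigma) = g(\sigma') \in T$.

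The main work is to verify that $\sigma$ is a NE of $\Gamma$, by ruling out improving deviations $\tau_j$ for every player $j$. If $j \neq i$, then $\tau_j$ is also a strategy in $\Gamma'$ (move sets agree outside $v$), and the play of $(\sigma_{-j}, \tau_j)$ is identical in the two games (it uses $(v, a)$ whenever it visits $v$), contradicting that $\sigma'$ is a NE in $\Gamma'$. If $j = i$, define $\tau_i^{*}$ from $\tau_i$ by overwriting $\tau_i^{*}(v) := a$; this is a strategy in $\Gamma'$. When $P(\sigma_{-i}, \tau_i)$ avoids $v$, the outcome matches $g_{\Gamma'}(\sigma'_{-i}, \tau_i^{*})$ and equilibrium of $\sigma'$ is contradicted. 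Otherwise the play in $\Gamma$ ends at $\tau_i(v) \in N^+(v)$, which is $\preceq_i a$ by the choice of $a$, while in $\Gamma'$ the modified play ends at $a$; chaining, $g(\sigma) = g(\sigma') \succeq_i a \succeq_i \tau_i(v)$, again precluding improvement.

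The anticipated obstacle is precisely this $j = i$ case: one must ensure that the player controlling $v$ cannot profit from the freshly unavailable moves $(v, a')$, and this is exactly where the choice of $a$ as the $\succ_i$-best terminal in $N^+(v)$ is essential. The remaining bookkeeping---that $\Gamma'$ retains the reachability hypothesis and that removing arcs to terminals cannot introduce new dead ends---is routine.
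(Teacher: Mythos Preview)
Your proof is correct and follows essentially the same approach as the paper: delete all moves from $v$ except the one to player $i(v)$'s best terminal $a\in N^+(v)$, obtain a terminal NE $\sigma'$ in the strictly smaller game by minimality, and argue that $\sigma'$ remains a NE in $\Gamma$ because only player $i(v)$ gains new options and any deviation through $v$ yields at most $a$, which he could already secure in $\Gamma'$. Your write-up is more explicit than the paper's (splitting the $j=i$ case by whether the deviation play visits $v$, and checking that reachability of $T$ survives), but the underlying idea is identical.
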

\begin{proof}
	Assume for a contradiction that for some position $v\in V_i\setminus T$, $i\in I$ we have $N^+(v)\subseteq T$. Let $a\in N^+(v)$ be the best outcome for player $i$ in the set $N^+(v)$, and define game $\Gamma'$ by removing moves $(v,b)$, $b\in N^+(v)$, $b\neq a$ from $\Gamma$. Since by Lemma \ref{l-no-dummies} we have $d^+(v)\geq 2$, game $\Gamma'$ is strictly smaller than $\Gamma$. Thus, it has a terminal NE $\gs'$. 
	
	We claim that $\gs'$ is a terminal NE in $\Gamma$, too. This leads to a contradiction, proving our statement. 
	
	To see the claim notice that all moves in $\Gamma'$ are also available in $\Gamma$, and hence $\gs'$ is a situation of $\Gamma$, too. Furthermore, the only moves that are available in $\Gamma$ and not available in $\Gamma'$ are the removed  $(v,b)$, $b\in N^+(v)$, $b\neq a$ moves. Thus, only player $i$ has more options in $\Gamma$ than in $\Gamma'$, and hence only he could improve on $\gs'$ in $\Gamma$.
	Furthermore,  player $i$ could improve on $\gs'$ only by utilizing one of the deleted moves. In that case however he could reach $v$ already in game $\Gamma'$, 
    and thus, by our choice of $a\in N^+(v)$, he cannot strictly improve on $\gs'$.
	\end{proof}

\begin{lemma}\label{l-no-1-moves}
	If $\Gamma$ is a minimal counter-example, then  it satisfies condition (No1M). 
\end{lemma}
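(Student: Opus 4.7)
The plan is to argue by contradiction, following the same reduction pattern used in Lemmas \ref{l-no-dummies} and \ref{l-no-forced-terminal}. Suppose $\Gamma$ contains a $\#1$-move $(u,a)$, that is, $a\in T\cap N^+(u)$ is the top outcome of player $i=i(u)$. Form a smaller game $\Gamma'$ from $\Gamma$ by deleting every move $(u,b)$ with $b\in N^+(u)\setminus\{a\}$, keeping only $(u,a)$ as the outgoing move at $u$. Condition (NoDummy) gives $d^+(u)\ge 2$, so $\Gamma'$ has strictly fewer moves than $\Gamma$; reachability of $T$ is preserved (since $u$ still reaches $T$ via $a$, and no other out-neighborhood changes); and the preferences are untouched. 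Thus $\Gamma'$ is a strictly smaller terminal DG game, and minimality of $\Gamma$ provides a terminal NE $\sigma'$ of $\Gamma'$.

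I would then show that $\sigma'$, viewed as a situation of $\Gamma$ (all of its moves are still legal there), is a terminal NE of $\Gamma$, contradicting the choice of $\Gamma$. The play $P(\sigma')$ and the outcome $g(\sigma')\in T$ are identical in $\Gamma$ and $\Gamma'$. A deviation by any player $j\ne i$ uses only moves present in $\Gamma'$, so it cannot strictly improve $j$'s outcome, by the NE property of $\sigma'$ in $\Gamma'$. For a deviation $\tau$ of player $i$ in $\Gamma$, define $\tau^*$ to agree with $\tau$ everywhere except at $u$, where $\tau^*(u)=a$; then $\tau^*$ is a legitimate situation of $\Gamma'$. Split into two cases: (i) if $P_\Gamma(\tau)$ avoids $u$, then the move chosen at $u$ is irrelevant to the outcome and $g_\Gamma(\tau)=g_{\Gamma'}(\tau^*)\preceq_i g_{\Gamma'}(\sigma')=g_\Gamma(\sigma')$; (ii) if $P_\Gamma(\tau)$ visits $u$, then $P_{\Gamma'}(\tau^*)$ follows the same prefix up to $u$ and there terminates at $a$, so $a=g_{\Gamma'}(\tau^*)\preceq_i g_{\Gamma'}(\sigma')$, which forces $g_{\Gamma'}(\sigma')=a$ since $a$ is maximal for $i$, and hence $g_\Gamma(\tau)\preceq_i a=g_\Gamma(\sigma')$. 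Either way, no deviation of $i$ is a strict improvement.

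The delicate point I expect is case (ii): one must notice that a putative beneficial deviation of $i$ passing through $u$ would force $g_{\Gamma'}(\sigma')=a$ in $\Gamma'$, which in turn makes every alternative outcome in $\Gamma$ no better than $a$ for player $i$, killing the improvement. Once this observation is in place, the argument is a direct reuse of the "restrict to the owner's best move" technique from Lemma \ref{l-no-forced-terminal}, and terminality of the NE in $\Gamma$ is inherited automatically from $\Gamma'$.
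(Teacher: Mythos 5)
Your proposal is correct and follows essentially the same route as the paper: delete all moves from $u$ except the \#1-move $(u,a)$, invoke minimality (via (NoDummy)) to get a terminal NE $\sigma'$ of the smaller game, and observe that any improving deviation of player $i(u)$ through $u$ in $\Gamma$ would already force $g(\sigma')=a$ in $\Gamma'$, so no improvement exists. Your explicit $\tau^*$ construction just spells out the step the paper states more briefly.
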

\begin{proof}
	Assume for a contradiction that there exists a move $(u,a)$, $a\in T$ such that $a$ is player $i(u)$'s best outcome. Let us now consider the game $\Gamma'$ obtained from $\Gamma$ by deleting all the other moves from $u$. Note, that by Lemma \ref{l-no-dummies}, we had more than one move from $u$, thus, $\Gamma'$ is strictly smaller than $\Gamma$. Consequently, $\Gamma'$ has a terminal NE, $\gs'$, since we assumed that $\Gamma$ is a minimal counter example. We claim that $\gs'$ is a NE also in $\Gamma$, contradicting our assumption that it is a counter-example to Theorem \ref{t-3T}, and hence proving the statement.
	
	To see our claim, note that only player $i(u)$ has moves in $\Gamma$ that are not available in $\Gamma'$ (from position $u$), and that his \#1 move $(u,a)$ is available in both games. Thus, if he can deviate from the play $P(\gs')$ to $u$ in $\Gamma$, then he can do the same in $\Gamma'$, too, and thus, we must have $g(\gs')=a$ since $\gs'$ is a NE in $\Gamma'$, and hence he cannot improve on the outcome which is already his best possible one. 
	Or otherwise, he cannot reach $u$, in which case he cannot improve in $\Gamma$ either. 
\end{proof}

\begin{lemma}\label{l-no-3-moves}
	If $\Gamma$ is a minimal counter-example, then  it satisfies condition (No3M). 
\end{lemma}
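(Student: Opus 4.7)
The plan is to follow the reduction template of Lemmas \ref{l-no-forced-terminal} and \ref{l-no-1-moves}. Recall that since $\Gamma$ is a terminal DG game with $|T|\le 3$, a $\#3$-move $(u,a)$ points to $i(u)$'s worst terminal (only $\infty$ is ranked below). Assume for contradiction that such a move exists in a minimal counter-example $\Gamma$, and note that by Lemma \ref{l-no-dummies} we have $d^+(u)\ge 2$. Let $\Gamma'$ be the game obtained from $\Gamma$ by deleting the single edge $(u,a)$; this is a strictly smaller terminal DG game. I would split the analysis on whether $T$ remains reachable from $v_0$ in $\Gamma'$.

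\emph{Case 1: $T$ is still reachable from $v_0$ in $\Gamma'$.} By the minimality of $\Gamma$, the game $\Gamma'$ admits a terminal NE $\sigma'$. Viewed as a situation in $\Gamma$ (using the same moves), the play and every deviation of a player $j\ne i(u)$ is unchanged, so only player $i(u)$ could try to improve, and only via the edge $(u,a)$. Such a deviation produces outcome $a$, but $a$ is $i(u)$'s worst terminal while $g(\sigma')\in T$, so no strict improvement is possible. Hence $\sigma'$ is a terminal NE of $\Gamma$, a contradiction.

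\emph{Case 2: $T$ is not reachable from $v_0$ in $\Gamma'$.} Because $T$ \emph{is} reachable in $\Gamma$, the edge $(u,a)$ is a bottleneck: every $v_0\to T$ walk in $\Gamma$ uses $(u,a)$, in particular enters $u$ and then goes to $a$. I would now use a different reduction, letting $\Gamma''$ be $\Gamma$ with every outgoing edge of $u$ \emph{except} $(u,a)$ deleted. By (NoDummy), $\Gamma''$ is strictly smaller than $\Gamma$, and any simple $v_0\to u$ path in $\Gamma$ (which uses no outgoing edge of $u$) survives in $\Gamma''$; so $v_0\to u\to a$ witnesses $T$ reachable from $v_0$ in $\Gamma''$. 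By minimality, $\Gamma''$ has a terminal NE $\sigma''$, and the bottleneck property forces $g(\sigma'')=a$. Lifting $\sigma''$ to $\Gamma$: only player $i(u)$ has extra options, namely the edges $(u,b)$ with $b\neq a$. Any such deviation produces a play not using $(u,a)$, which therefore cannot reach $T$ and outputs $\infty\prec_{i(u)} a$. Thus $\sigma''$ is a terminal NE of $\Gamma$, again a contradiction.

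The main obstacle is Case 2. A naive deletion of the single edge $(u,a)$ may destroy all $v_0\to T$ paths, and then Lemma \ref{l-notreachable} applied to $\Gamma'$ yields only a non-terminal NE $\sigma'$ with $g(\sigma')=\infty$; this need not lift to a NE of $\Gamma$, since player $i(u)$ could gladly deviate to $(u,a)$ and jump from $\infty$ to the terminal $a$. The cure is to exploit the very obstruction that created Case 2: the bottleneck role of $(u,a)$ ensures $T$ is still reachable in $\Gamma''$ (so minimality delivers a \emph{terminal} NE) and simultaneously ensures that every alternative choice at $u$ forfeits termination, blocking $i(u)$'s only potential improvement.
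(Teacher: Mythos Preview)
Your argument is correct. Case~1 is exactly the paper's proof: delete the $\#3$-move $(u,a)$, invoke minimality to get a terminal NE $\sigma'$ of the smaller game $\Gamma'$, and observe that the only new option in $\Gamma$---the edge $(u,a)$ to $i(u)$'s worst terminal---cannot help $i(u)$ since $g(\sigma')\in T$ already dominates $a$ for him.

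Where you differ from the paper is in isolating Case~2. The paper simply asserts that $\Gamma'$ has a \emph{terminal} NE by minimality, but the induction hypothesis (Theorem~\ref{t-3T}) yields a terminal NE only when $T$ is reachable from $v_0$ in $\Gamma'$, and the previously established properties (NoDE), (NoDummy), (NofT), (No1M) do not obviously guarantee this after removing a single terminating edge. Your Case~2 argument handles this cleanly: when $(u,a)$ is a bottleneck, keeping \emph{only} $(u,a)$ at $u$ produces a strictly smaller game $\Gamma''$ in which $T$ is certainly reachable, so minimality yields a terminal NE $\sigma''$; the bottleneck forces $g(\sigma'')=a$; and any deviation by $i(u)$ that replaces $(u,a)$ at $u$ cannot terminate (else it would give a $v_0\to T$ path in $\Gamma$ avoiding $(u,a)$). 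One small clarification worth adding: a deviation by $i(u)$ that keeps $(u,a)$ at $u$ but changes his choices elsewhere is already available in $\Gamma''$ and hence non-improving---you cover this implicitly, but it is the reason your case split on the move at $u$ is exhaustive.

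It is worth noting that Case~2 is in fact vacuous in a minimal counter-example: by minimality of the number of positions, every vertex (and in particular every terminal) is reachable from $v_0$, and since a path from $v_0$ to some terminal $b\neq a$ cannot use the edge $(u,a)$ (which ends at $a$), deleting $(u,a)$ leaves $T$ reachable. The paper presumably relies on this tacitly. Your direct treatment of Case~2 avoids that extra reachability lemma and is a perfectly good alternative.
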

\begin{proof}
	Assume for a contradiction that there exists a move $(u,c)$, $c\in T$, where $c$ is player $i(u)$'s worst terminal outcome (since we have $|T|=3$.) Let us consider game $\Gamma'$ obtained from $\Gamma$ by deleting move $(u,c)$. Then $\Gamma'$ is strictly smaller than $\Gamma$. Thus, $\Gamma'$ has a terminal NE, $\gs'$, since we assumed that $\Gamma$ is a minimal counter example. We claim that $\gs'$ is a NE in $\Gamma$, too, leading to a contradiction with the selection of $\Gamma$, and hence proving our statement. 
	
	To see our claim, note that $g(\gs')\in T$, and thus, we have $g(\gs')\succ_{i(u)} c$, since $c$ is the worst terminal outcome of player $i(u)$. Consequently, adding back the $(u,c)$ move cannot help player $i(u)$ to improve on $\gs'$ in $\Gamma$, either. 
\end{proof}

\begin{corollary}\label{cor-2moves}
	If $\Gamma$ is a minimal counter-example then  all terminal moves in it are $\#2$-moves. 
	\qed
\end{corollary}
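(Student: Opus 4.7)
The plan is to observe that the corollary is an immediate combinatorial consequence of Lemmas \ref{l-no-1-moves} and \ref{l-no-3-moves}, together with the definition of a terminal DG game. First I would note that in a terminal DG game every player ranks $\infty$ strictly below all three terminal outcomes, so in every player's linear order on $A=T\cup\{\infty\}$ the outcome $\infty$ occupies the $\#4$ position. Consequently, since $|T|=3$, the three terminals occupy exactly the positions $\#1$, $\#2$, $\#3$ in each player's preference.

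Next, I would observe that by definition every terminal move $(u,a)$ with $a\in T$ is a $\#k$-move for some $k\in\{1,2,3\}$, namely the rank of $a$ in the preference of player $i(u)$. Thus, to conclude that all terminal moves are $\#2$-moves, it suffices to exclude the cases $k=1$ and $k=3$.

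The exclusion of $k=1$ is exactly the content of Lemma \ref{l-no-1-moves}, which states that a minimal counter-example contains no $\#1$-moves. Similarly, the exclusion of $k=3$ is exactly Lemma \ref{l-no-3-moves}, which forbids $\#3$-moves in a minimal counter-example (and whose proof uses $|T|=3$ to identify the $\#3$ rank with the worst terminal). Combining these two lemmas with the trichotomy above immediately yields that every terminal move of $\Gamma$ must be a $\#2$-move, completing the proof.

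There is no substantial obstacle here: the only thing to double-check is that the indexing convention for $\#k$-moves counts ranks within the full outcome set $A=T\cup\{\infty\}$, and that in a terminal game $\infty$ is always the $\#4$ outcome, so terminals fill exactly ranks $\#1,\#2,\#3$. After that observation, the corollary is just the logical conjunction of the two preceding lemmas.
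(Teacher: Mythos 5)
Your argument is correct and is exactly the reasoning the paper intends: the corollary is stated with no written proof precisely because, in a terminal game with $|T|=3$, the outcome $\infty$ is worst for every player, so every terminal move is a $\#1$-, $\#2$-, or $\#3$-move, and Lemmas \ref{l-no-1-moves} and \ref{l-no-3-moves} rule out the first and third possibilities. Your only added care, checking that ranks are taken in $A=T\cup\{\infty\}$ with $\infty$ in last place, is the right convention and matches the paper's usage.
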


\begin{lemma}\label{l-switching}
	Assume that $\Gamma$ satisfies conditions (NofT) and (No1M), and that $\gs\subseteq E$ is a terminal NE with $g(\gs)=a\in T$. Let us define a new situation $\gs'$ by defining for all $u\in V\setminus T$ 
	\[
	\gs'(u)~=~
	\begin{cases}
		v\in N^+(u)\setminus T & \text{ if } \gs(u)\in T\setminus \{a\},\\
		\gs(u) & \text{ otherwise}.
	\end{cases}
	\]
	Then $\gs'$ is also a terminal NE of $\Gamma$.
\end{lemma}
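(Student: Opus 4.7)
My plan is to establish the lemma in two parts: first, that $g(\sigma') = a$, and second, that no player can improve on $\sigma'$.

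For the first part, note that any position $u$ affected by the modification has $\sigma(u) \in T \setminus \{a\}$. Such a position cannot lie on the play $P(\sigma)$, since otherwise that play would terminate at $\sigma(u) \neq a$, contradicting $g(\sigma) = a$. Hence $\sigma$ and $\sigma'$ coincide along $P(\sigma)$, so $P(\sigma') = P(\sigma)$ and $g(\sigma') = a$; condition (NofT) ensures that the required $v \in N^+(u) \setminus T$ exists so that $\sigma'$ is well-defined.

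The crucial structural feature of $\sigma'$ is that every move it prescribes leads to a non-terminal or directly to $a$, so no move of $\sigma'$ points to a non-$a$ terminal. In the forced graph $G_{\sigma'}$, every play from every starting position therefore reaches $a$ or loops (giving outcome $\infty$). Consequently, whenever a player $k$ deviates with some $\tau_k$ and the resulting play terminates at some $b \neq a$, the final move must be $k$'s own: there is $w \in V_k$ on the play with $(w,b) \in E$ and $\tau_k(w) = b$. By (No1M), such a $b$ cannot be $k$'s best outcome.

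To show the NE property I argue by contradiction. Suppose $k$ improves, with $g(\sigma'_{-k} \cup \tau_k) = b \succ_k a$, so $b \in T \setminus \{a\}$. I then compare the plays of $\tau_k$ against $\sigma'_{-k}$ and against $\sigma_{-k}$. If the two plays never diverge, both yield $b$ and the NE of $\sigma$ immediately gives the contradiction $b \preceq_k a$. Otherwise, at the first divergence, a position $u \in V_j$ with $j \neq k$ and $\sigma_j(u) = c \in T \setminus \{a\}$, the $\sigma$-play terminates at $c \preceq_k a$ while the $\sigma'$-play continues through $v = \sigma'_j(u)$ and eventually reaches $b$ via $w$.

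The hardest step is deriving the contradiction in this divergent case. My plan is to choose $\tau_k$ to be a minimal improving deviation—one minimizing the number of positions in $V_k$ where $\tau_k$ differs from $\sigma'_k$—and argue that for such a minimal $\tau_k$ the terminating position $w$ must in fact lie on $P(\sigma) = P(\sigma')$. The improving play then coincides with $P(\sigma)$ up to $w$ and terminates at $b$, which reduces to the no-divergence case and yields the contradiction. Condition (No1M) is essential to this minimization: since $b$ is not $k$'s $\#1$-outcome, $k$ can improve on $a$ only by exploiting a specific terminal move $(w,b)$, which constrains the minimal improving deviations enough to force $w \in P(\sigma)$. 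Carefully carrying out this minimization while tracking the divergences produced by the off-path modifications is the technical core of the proof.
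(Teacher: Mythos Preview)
Your first two paragraphs are correct and essentially match the paper's argument: $\sigma'$ is well defined by (NofT), $P(\sigma')=P(\sigma)$ so $g(\sigma')=a$, and in $\sigma'$ the only terminal moves are to $a$, so if a deviation by player $k$ ends at some $b\in T\setminus\{a\}$ then the terminating move is $k$'s own and (No1M) says $b$ is not $k$'s top outcome. Your case split (non-divergent / divergent) is also the right one, and in the divergent case you correctly extract $c:=\sigma_j(u)\in T\setminus\{a\}$ with $c\preceq_k a$ from the NE property of $\sigma$.

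The gap is the last paragraph. At that point you already have everything needed to finish \emph{in one line}, but only because the lemma lives (implicitly) in the setting of a terminal game with $|T|\le 3$, which the paper's proof uses explicitly. From $b\succ_k a$ and $c\preceq_k a$ you get $b\neq c$, so $T=\{a,b,c\}$; since the game is terminal, $\infty\prec_k a$; hence $b$ is strictly better for $k$ than each of $a,c,\infty$, i.e.\ $b$ is $k$'s best outcome --- contradicting (No1M) at the move $(w,b)$ you already identified. That is exactly how the paper concludes.

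Your proposed ``minimal improving deviation'' argument, by contrast, cannot be made to work without invoking $|T|\le 3$, because the lemma is \emph{false} for larger $T$. A small counterexample with $|T|=4$: positions $v_0,v_2\in V_1$, $v_1\in V_2$; moves $(v_0,a),(v_0,v_1),(v_1,c),(v_1,v_2),(v_2,b),(v_2,v_0)$; terminals $T=\{a,b,c,d\}$ with $d$ unreachable; preferences $d\succ_1 b\succ_1 a\succ_1 c\succ_1\infty$ and any terminal-game preference for player~$2$ in which $c$ is not his $\#1$ outcome. Then (NofT) and (No1M) hold, $\sigma=(v_0\!\to\!a,\;v_1\!\to\!c,\;v_2\!\to\!b)$ is a terminal NE with $g(\sigma)=a$, but $\sigma'=(v_0\!\to\!a,\;v_1\!\to\!v_2,\;v_2\!\to\!v_0)$ is not a NE: player~$1$ switches to $v_0\!\to\!v_1$, $v_2\!\to\!b$ and obtains $b\succ_1 a$. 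Here your ``minimal'' improving deviation has its terminating position $w=v_2$ \emph{not} on $P(\sigma)$, so the reduction you sketch fails. Drop the minimization plan and use $|T|=3$ together with the terminal-game assumption to close the divergent case directly.
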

\begin{proof}
	Let us note first that $\gs'$ is well defined, due to condition (NofT).
	
	Assume for a contradiction  that a player $i\in I$ can improve on $\gs'$ by switching to situation $\gs''$. Then the play $P(\gs'',v_0)$ must pass through at least one of the positions $u\in V$ for which $\gs'(u)\neq \gs(u)$ since otherwise $\gs''$ would be an improvement over $\gs$, contradicting the assumption that $\gs$ is a NE. This implies that player $i$ could deviate from $\gs$ and reach terminal $\gs(u)\in T$. Since $\gs$ is a NE by our assumption, it follows that 
	\begin{equation}\label{e-gsu<a}
		\gs(u)\not\succ_i a.
	\end{equation}
	Since we assume that $\gs''$ is an improvement on $\gs'$ we must have 
	\begin{equation}\label{e-newa}
		g(\gs'',v_0)\succ_i a=g(\gs',v_0)=g(\gs,v_0).
	\end{equation}
	Since $\infty$ is not better than $a$ for any of the players,  and by the above two relations, it follows that $g(\gs'',v_0)$ is player $i$'s best outcome. Since in situation $\gs'$ terminal moves are only to $a\in T$, the last position before $T$ on $P(\gs'',v_0)$ must belong to $V_i$, in contradiction with condition (No1M). 
\end{proof}

\begin{proof}[\rm\textbf{Completion of the proof of Theorem \ref{t-3T}}]
	Assume for a contradiction that $\Gamma$ is a minimal counter-example. Since we assume that $T$ is reachable from $v_0$, it must have some terminal moves, and by Corollary \ref{cor-2moves} all such moves are $\#2$-moves. Let $(u,b)\in E(\Gamma)$ be such a $\#2$-move. By Lemma \ref{l-no-forced-terminal} condition (NofT) also holds, and thus, the game $\Gamma'$ defined by $E(\Gamma')=E(\Gamma)\setminus \{(u,v)\mid (u,v)\in E(\Gamma), ~ v\not\in T\}$ is strictly smaller than $\Gamma$, and hence it must have a terminal NE $\gs'\subseteq E(\Gamma')$, since $\Gamma$ is assumed to be a minimal counter-example. 
	\begin{description}
		\item[Case 1: $g(\gs')=b$.] In this case by Lemma \ref{l-switching} we have another terminal NE $\gs''\subseteq E(\Gamma')$ such that $g(\gs')=g(\gs'')=b$ and $\gs''$ has no moves to $T\setminus\{b\}$. We claim that $\gs''$ is also a terminal NE of $\Gamma$. 
		
		To see this claim, note that only player $i(u)$ could improve on $\gs''$ in $\Gamma$, and only if he could reach his \#1 terminal, since $g(\gs'')=b$ is already his second best outcome and he/she does not prefer $\infty$ to $b$. Since $\gs''$ has no moves to $T\setminus \{b\}$ and since $\Gamma$ satisfies property (No1M) by Lemma \ref{l-no-1-moves}, such an improvement is impossible.
		
		\item[Case 2: $g(\gs')\neq b$.] We claim that $\gs'$ is also a terminal NE of $\Gamma$. 
		
		To see this claim, note that only player $i(u)$ could improve on $\gs'$ in $\Gamma$, and only if he could reach position $u$ and utilize one of the deleted moves. In this case however, he can reach position $u$ already in $\Gamma'$ and could deviate to outcome $b$. Since $\gs'$ is a NE of $\Gamma'$, we must have $g(\gs')\succ_{i(u)} b$ due to strict preferences, implying that $g(\gs')$ is the best terminal outcome of player $i(u)$. Since condition $\infty$ is his/her worst outcome, he cannot improve on $\gs'$. 
	\end{description}
	
	In both cases we could show that $\Gamma$ has a terminal NE, contradicting our assumption that it is a counter-example. This contradiction proves our statement. 
\end{proof}

\begin{remark} $~$
	\begin{itemize}
		\item Note that the proof of Case 2 above depends on strict preferences. 
		\item Note that \cite{BGMOV18} has a 3-terminal example in which there is no NE and in which $\infty$ ranks better than two of the terminals for some of the players, see Figure \ref{fig-exMartin-Vladimir}.
		\item The next example in Figure \ref{fig-example2} shows that a 3-terminal game may not have a terminal NE if $\infty$ ranks better than one of the terminals (though it still has a non-terminal NE). This example can naturally be extended to any number of players.
	\end{itemize}
\end{remark}

\begin{figure}[ht]
	\begin{center}
		\begin{tikzpicture}[scale=0.8,fill opacity=0.5]
			\node[circle,draw,fill=blue,opacity=0.5] (v1) at (4,0) {$1$};
			\node[circle,draw,fill=red,opacity=0.5] (v3) at (8,0) {$3$};
			\node[circle,draw,fill=green,opacity=0.5] (v2) at (6,3.46) {$2$};
			
			\node[circle,draw,fill=white] (a) at (2,0) {$a$};
			\node[circle,draw,fill=white] (b) at (6,5.46) {$b$};
			\node[circle,draw,fill=white] (c) at (10,0) {$c$};
			\draw[thick,->] (v1) to (v2);
			\draw[thick,->] (v1) to (a);
			\draw[thick,->] (v2) to (v3);
			\draw[thick,->] (v2) to (b);
			\draw[thick,->] (v3) to (v1);
			\draw[thick,->] (v3) to (c);
			
			\node[right] () at (8,5.46) {$b\succ_1 c\succ_1\infty\succ_1 a$};
			\node[right] () at (8,4.46) {$c\succ_2 a\succ_2\infty\succ_2 b$};
			\node[right] () at (8,3.46) {$a\succ_3 b\succ_3\infty\succ_3 c$};
						
		\end{tikzpicture}
	\end{center}
	\caption{A 3-terminal play-once example with three players $I=\{1,2,3\}$ and terminals $T=\{a,b,c\}$. Each player controls one of the positions, as indicated on the picture. The initial position $v_0$ could be any of the three positions controlled by the players. The player's preferences are shown next to the figure.   This example is not a terminal DG game, since all players rank $\infty$ higher than one of the terminal outcomes. This example has no terminal NE, though the 3-cycle is a non-terminal NE.  
 \label{fig-example2}}
\end{figure}
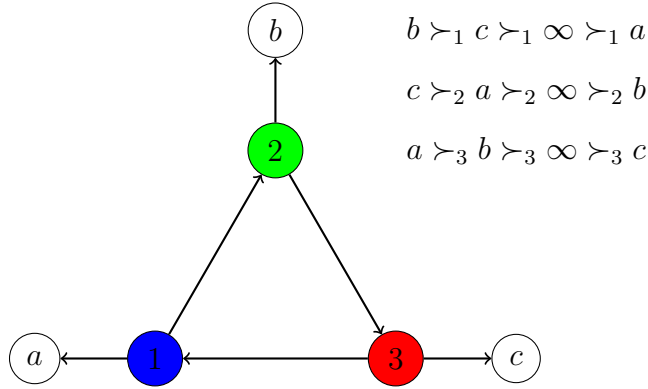

\section{Play-once DG games}
\label{s5} 

In this section we focus on terminal games for which $|V_i|=1$ for all players $i\in I$. We call such games \textit{play-once}. We prove that such games always have a NE, though some games may have only non-terminal NE, see Figure \ref{fig-example2}.

Let us observe first that without loss of generality 
we can assume that $N^+(v)\not\subseteq T$ and that 
\begin{equation}\label{e-at-most-one-termination}
|N^+(v)\cap T|\leq 1
\text{ for all positions } v\in V. 
\end{equation}
For positions with $|N^+(v)\cap T|=1$ we denote by $a(v)\in N^+(v)\cap T$ the terminal to which $v$ has a move. 

Let us define a partition of the positions. Let us define first
\begin{equation}\label{e-S}
    S=\{v\in V\mid N^+(v)\cap T\neq\emptyset \text{ and } a(v)\succ_{i(v)}\infty \}
\end{equation}

Let us next define 
\[
W=\{v\in V\mid v \text{ is reachable from } v_0 \text{ by a directed path in } G[V\setminus S]\}
\]
Let us next define $Q\subseteq V\setminus (S\cup W\cup T)$ as the set of vertices from which a directed cycle is reachable
in $G[V\setminus (S\cup W\cup T)]$ by a directed path (possible empty path). 
Finally we set $R=V\setminus (S\cup W\cup T\cup Q)$. Note that the set $R$ is acyclic (and could be empty). 

\begin{lemma}
\label{l-Wcycle}
A play-once game $G$ has a non-terminating NE if and only if $G[W]$, the graph induced by the set $W$, has a cycle. 
\end{lemma}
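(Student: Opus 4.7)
The plan is to prove both directions separately, using the play-once structure and the definition of the sets $S$ and $W$.

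For the ($\Rightarrow$) direction, suppose $\sigma$ is a non-terminating NE, so the play $P(\sigma)$ from $v_0$ enters a cycle $C^*$ and $g(\sigma)=\infty$. The main claim is that $P(\sigma) \subseteq V\setminus S$: if some vertex $v\in S$ lay on $P(\sigma)$, then player $i(v)$ (who controls $v$, and only $v$, by the play-once hypothesis) could deviate to the edge $(v,a(v))$; since the other players' moves are unchanged and the original play reaches $v$, the new play also reaches $v$ and then terminates at $a(v)$. By definition of $S$, $a(v)\succ_{i(v)} \infty$, contradicting the NE property. Hence every vertex of $P(\sigma)$ is reachable from $v_0$ by a directed path in $G[V\setminus S]$, i.e., lies in $W$. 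In particular $C^*\subseteq W$, so $G[W]$ contains a cycle.

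For the ($\Leftarrow$) direction, suppose $G[W]$ contains a cycle $C$. Pick any $u\in C$; by the definition of $W$ there is a directed path $P$ from $v_0$ to $u$ in $G[V\setminus S]$, and the vertices of $P$ all belong to $W$. Construct a situation $\sigma$ as follows: on $P$ take the edges of $P$; on $C$ take the edges of $C$; and for every remaining $v\in V\setminus T$, let $\sigma(v)$ be any non-terminal out-neighbor of $v$, which exists thanks to the standing assumption $N^+(v)\not\subseteq T$. The play from $v_0$ under $\sigma$ traces $P$ and then cycles along $C$, so $g(\sigma)=\infty$. The crucial extra property is that $\sigma(v)\notin T$ for every $v\in V\setminus T$, so the play from any $w\in V\setminus T$ under $\sigma$ never reaches $T$ and is therefore infinite.

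To check that $\sigma$ is a NE, consider a deviation by player $i$ at $v_i$. If $v_i\notin P\cup C$, then $P(\sigma)$ does not visit $v_i$, so the outcome is unchanged and $i$ cannot improve. If $v_i\in P\cup C$, then $v_i\in W\subseteq V\setminus S$, so either $v_i$ has no terminal neighbor or $\infty\succ_i a(v_i)$. A deviation edge $(v_i,w)$ with $w\in T$ must satisfy $w=a(v_i)$ by \eqref{e-at-most-one-termination}, so the new outcome $a(v_i)$ is not preferred to $\infty$ by player $i$; a deviation with $w\notin T$ initiates an infinite play (by the property above), so the new outcome is again $\infty$. Either way, no improvement is available.

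The main conceptual point—and the only step requiring care—is realizing that we can use the assumption $N^+(v)\not\subseteq T$ to choose $\sigma(v)\in V\setminus T$ uniformly on $V\setminus(P\cup C)$, which is what makes every possible deviation target fall into an infinite continuation. Once this observation is in place, both directions rest on the simple dichotomy that vertices in $S$ are precisely those where the owner would kill $\infty$ by terminating, and vertices in $W$ are precisely those reachable from $v_0$ without ever crossing $S$.
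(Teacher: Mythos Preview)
Your proof is correct and follows essentially the same approach as the paper's: for the ``if'' direction you build the situation by routing a path from $v_0$ to a cycle $C\subseteq W$ and choosing non-terminating moves elsewhere, and for the ``only if'' direction you argue (equivalently to the paper's contrapositive) that an infinite NE play cannot pass through $S$, hence lies entirely in $W$. The only minor imprecision is that your definition of $\sigma$ is ambiguous on $P\cap C$ if the path $P$ meets $C$ before its endpoint $u$; truncating $P$ at its first intersection with $C$ (or giving $C$-edges priority) fixes this without affecting the argument.
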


\begin{proof}
For the if part, 
 choose as a play such a cycle $C\subseteq W$ together with a path from $v_0$ to $C$. 
 For all the other vertices we choose non-terminating moves. 
It is not difficult to see that this produces a non-terminal NE. 

On the other hand, if $W$ is acyclic, then any play either terminates in $W$, or goes through a vertex of $S$. In the latter case it cannot be an infinite play by the definition of $S$, since each vertex $v \in S$ prefers its terminal move to the infinite play.
\end{proof}

Note that by the above definitions if $W\neq \emptyset$, then we have $v_0\in W$, while if $W=\emptyset$, then $v_0\in S$.

By the above lemma we can assume for the rest of our proof that $W$ induces an acyclic graph (or it is empty). 

\begin{lemma}\label{l-cycles-S}
If $G[W]$ is acyclic, then all directed cycles in $G[V\setminus Q]$ contain a vertex from $S$. 
\end{lemma}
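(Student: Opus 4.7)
The plan is to argue by contradiction, using the partition $V = S \cup W \cup T \cup Q \cup R$ and two closure-type observations: first, $T$ has no outgoing edges so no cycle meets $T$; second, $W$ is closed under taking out-neighbors that lie outside $S$. Suppose $C$ is a directed cycle in $G[V\setminus Q]$ containing no vertex of $S$. Since cycles avoid $T$, we get $C\subseteq W\cup R$.

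The main step is to show $C\cap W=\emptyset$. Suppose $v\in C\cap W$ and let $v'$ be the successor of $v$ on $C$. By assumption $v'\notin S$, and by the definition of $W$, since $v$ is reachable from $v_0$ in $G[V\setminus S]$ and the move $(v,v')$ stays inside $V\setminus S$, the vertex $v'$ is also reachable from $v_0$ in $G[V\setminus S]$; hence $v'\in W$. Iterating around the cycle shows $C\subseteq W$, so $G[W]$ contains the cycle $C$, contradicting the hypothesis that $G[W]$ is acyclic.

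Therefore $C\subseteq R$. But then $C$ is a directed cycle lying entirely in $V\setminus (S\cup W\cup T)$, so by the definition of $Q$, every vertex of $C$ belongs to $Q$ (each is on a cycle reachable from itself in $G[V\setminus(S\cup W\cup T)]$). This contradicts $C\subseteq V\setminus Q$, completing the proof.

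I do not anticipate a real obstacle here; the claim is essentially unpacking the definitions. The only subtle point is the forward-invariance of $W$ used in the middle paragraph (that an edge from $W$ to a non-$S$ vertex lands again in $W$), which follows immediately from the path-reachability definition of $W$ in $G[V\setminus S]$.
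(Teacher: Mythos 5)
Your proof is correct and follows the same route the paper has in mind: the paper dismisses this lemma as ``immediate by the above definitions,'' and your argument is exactly the careful unpacking of those definitions (cycles avoid $T$, forward-invariance of $W$ under non-$S$ out-neighbors so a cycle meeting $W$ would lie in the acyclic $G[W]$, and a cycle inside $R$ would force its vertices into $Q$). Nothing further is needed.
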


\begin{proof}
    Immediate by the above definitions.
\end{proof}

\begin{lemma}\label{l-main-terminalNE}
    If $G[W]$ is acyclic, then there exists a terminal NE.
\end{lemma}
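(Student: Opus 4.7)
The plan is to construct a strategy profile $\sigma$ in stages, and then argue that it yields a terminal NE.

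First, I set $\sigma(v)=a(v)$ for every $v\in S$, and for every $v\in Q$ I pick an out-neighbor inside $Q$ so that the induced play from any vertex of $Q$ stays in $Q$ and eventually enters a cycle, giving outcome $\infty$ from every vertex of $Q$. This is feasible by the definitions of $S$ and $Q$. Next, I verify that $G[W\cup R]$ is acyclic: $G[W]$ is acyclic by hypothesis; $G[R]$ is acyclic because $R\subseteq V\setminus(S\cup W\cup T)$ and no vertex of $R$ can reach a cycle in $G[V\setminus(S\cup W\cup T)]$; and there are no arcs from $W$ into $R$, since any out-neighbor of $v\in W$ outside $S\cup T$ is reachable from $v_0$ while avoiding $S$, and hence belongs to $W$. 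On the DAG $G[W\cup R]$ I run Backward Induction with boundary outcomes: arcs into $S$, $T$, $Q$ yield outcomes $a(u)$, $u$, and $\infty$, respectively. This defines $\sigma$ on $W\cup R$, and completes the construction.

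I next verify that the resulting play is terminal. Since $v_0\in W\cup S$ and arcs from $W$ never leave $W\cup S\cup T$, the play $P(\sigma)$ stays in that union; by acyclicity of $G[W]$ it either exits via a direct terminal arc or via a single $S$-hop into $T$, so $g(\sigma)\in T$. For the NE property three cases arise: (i) off-play players cannot alter $P(\sigma)$ by changing only their own move, so no deviation is profitable; (ii) on-play players in $W$ are at their BI choice, so no deviation helps them (noting that $W$ has no arcs into $Q$, so no extra $\infty$-deviation need be considered); (iii) on-play players in $S$ currently receive $a(v)$, and any deviation into $Q$ yields $\infty$, strictly worse than $a(v)$ by the definition of $S$.

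The main obstacle is ruling out deviations by on-play $S$-players into $S\cup W\cup R$, which produce terminal outcomes possibly preferred to $a(v)$ under $\succ_{i(v)}$. I plan to handle this by iterative refinement: whenever an on-play $v\in S$ admits a strictly improving deviation, swap $\sigma(v)$ to that move and rerun BI on $W\cup R$ under the updated $S$-strategies. Establishing termination of this procedure is the delicate part of the proof; my intended argument combines Lemma~\ref{l-cycles-S} (every cycle in $G[V\setminus Q]$ meets $S$, so no sequence of $S$-updates can feed an infinite loop of mutual improvements through a cyclic structure) with a lexicographic potential on the multiset of realized outcomes at on-play $S$-vertices, together with the fact that each $S$-vertex retains $a(v)$ as a guaranteed fallback strictly preferred to $\infty$. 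Once the refinement stabilizes, $\sigma$ is the desired terminal NE.
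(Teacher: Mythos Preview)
Your setup is sound: the partition works, $G[W\cup R]$ is indeed acyclic, the $Q$-strategies are well defined, and you correctly isolate the one real obstacle, namely that an on-play $S$-vertex $v$ may have a profitable deviation into $S\cup W\cup R$. The argument collapses at the iterative-refinement step, however.

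First, the procedure is not well defined. If you change $\sigma(v)$ from $a(v)$ to some $u\in W\cup R$ and then ``rerun BI on $W\cup R$ under the updated $S$-strategies'', the boundary value BI needs at $v$ is the outcome of following $\sigma$ from $u$, which in turn depends on the BI you are recomputing---a circularity. If instead you do not rerun BI, the $W$-vertices upstream of $v$ on the play lose their BI optimality (their choices were made against the old boundary value $a(v)$ at $v$), so fresh profitable deviations can appear for them. Either way, a single $S$-update does not preserve the partial equilibrium structure you have built. Second, termination is unsupported. There is at most one on-play $S$-vertex at any stage (the play terminates at the first $S$ it meets), and after an update that vertex changes identity; the outcome at the new on-play $S$-vertex is unrelated, under any player's preference, to the old one, so your ``lexicographic potential on the multiset of realized outcomes at on-play $S$-vertices'' has no monotone behaviour. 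Lemma~\ref{l-cycles-S} says every cycle of $G[V\setminus Q]$ meets $S$, but this does not rule out best-response cycling among distinct $S$-vertices.

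The paper sidesteps iteration by a different device. Rather than freezing every $S$-vertex at $a(v)$, it deletes a \emph{minimal} set $F\subseteq D(S)$ of arcs leaving $S$ so that $G[V\setminus Q]\setminus F$ becomes acyclic, runs BI on that acyclic graph (so $S$-vertices participate in BI and may choose non-terminal moves), and then redirects strategies along a maximal in-tree rooted at the BI outcome. Minimality of $F$ is the key: any deviation via an arc $(w,u)\in F$ forces a $u\to w$ path in the constructed situation, so the deviation creates a cycle with outcome $\infty$, which $w\in S$ ranks below $a(w)$ and hence below its current BI outcome. This single-shot feedback-arc-set construction is precisely the missing idea in your proposal.
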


\begin{proof}
    Let us denote by $E'$ the set of moves of the induced subgraph $G[V\setminus Q]$ and by $D(S)=\{(u,v)\in E'\mid u\in S\}$
    the set of arcs leaving 
    vertices of $S$.
    
    Let us next consider a minimal subset $F\subseteq D(S)$ such that $G'=(V\setminus Q,E'\setminus F)$ is acyclic. By Lemma \ref{l-cycles-S} such a set exists.

    Let us next apply backward indiction on the acyclic graph $G'$, denote by $\sigma\subseteq E'\setminus F$ the obtained situation, by  $P(\sigma)$ the corresponding terminal play, and by $g(\sigma)\in T$ the terminal endpoint of $P(\sigma)$.  

    Let us consider in $G'$ a maximal directed tree $U$ rooted at $g(\sigma)$ that contains $P(\sigma)$, and define $\sigma'(v)=u$ for all moves $(v,u)\in U$. For vertices $w\in V\setminus Q$ that does not belong to $U$ we set $\sigma'(w)=\sigma(w)$. Finally, for vertices $w\in Q$ we define $\sigma'(w)$ such that $(w,\sigma'(w))$ is a non-terminating move from $w$.

    Note that $\sigma'$ is a situation in $G$ and we have $P(\sigma')=P(\sigma)$. 

    We claim that $\sigma'$ is a terminal NE in $G$. To see this claim let is first observe that for all positions $w\not\in Q$ we have 
    \begin{equation}\label{e-sigma'}
        \text{either } g(\sigma',w)=g(P(\sigma)) \text{ or } g(\sigma',w)=g(\sigma,w).
    \end{equation}

Since only positions along the play $P(\sigma')=P(\sigma)$ could improve on $\sigma'$, we consider a position $w$ along the play $P(\sigma')$ and a potential improving move $(w,u)\in E$. We break our proof into three cases, as follows:

\begin{itemize}
    \item[Case 1:] $(w,u)\in E'\setminus F$,  
    is an arbitrary move from $w$ not along the play. Since $G'$ is acyclic, one cannot create a cycle by switching $\sigma'(w)$ to $u$. Hence, by \eqref{e-sigma'} such switching 
    can earn player $w$ either $g(P(\sigma))$ or $g(\sigma,u)$. By the properties of backward induction we have $g(\sigma,u)\preceq_w g(\sigma,w)=g(P(\sigma))$ since the same switch is available for $w$ in $G'$. Consequently the switch to $u$ yields no improvement in both cases.
    \item[Case 2:] $(w,u)\in F$. By the definition of $F$ we must have $w\in S$. By the minimality of $F$ there exists a path $u\to w$ in $G'$, and hence also in $\sigma'$ by the definition of $\sigma'$. Thus, switching $\sigma'(w)$ to $u$ would create a cycle. Since $w\in S$, we have $g(\sigma')\succeq_w a(w)\succ_w\infty$, implying that such a switch is not improving for player $w$.
    \item[Case 3:] $(w,u)\in E\setminus E'$. By the definitions of the partition of the vertices of $G$, we must have $w\in S$ and $u\in Q$. Note that a switch of $\sigma'(w)$ to $u$ is not changing $\sigma'$ inside the set $Q$. Therefore,
    since $g(\sigma',u)=g(\sigma,u)=\infty$ and we have $g(\sigma')\succeq_w a(w)\succ_w \infty$, such a move cannot be improving for player $w$.
\end{itemize}
Since we could not improve by all possible switches of $\sigma'$ from a position along its play, $\sigma'$ is a NE.
\end{proof}

Note that Lemma \ref{l-main-terminalNE} is not an if and  only if claim, since we may have a terminal NE even when the induced subgraph $G[W]$ is not acyclic. 

\bigskip

\noindent{\textbf{Proof of Theorem \ref{t-playonce}}:}
The statement follows by Lemmas \ref{l-Wcycle} and \ref{l-main-terminalNE}.
\qed

\section{Discussions}\label{s6}

We conjecture that condition $|T| \leq 3$ can be waved in Theorem \ref{t-3T}. 
In other words, every terminal DG game has a NE and, moreover 
it has a terminal NE whenever 
a terminal can be reached by a directed path from the initial position. 

If the first part of the above conjecture fails, 
that is, there exists an NE-free terminal DG game, 
then there exists also a terminal DG game that has a non-terminal NE, 
but no terminal one \cite{BGMOV18}, 
although for any such game 
the non-terminal outcome $\infty$ is worse than 
all terminal outcomes for all players.  

\smallskip 

The following conjecture also remains open (Catch 22 \cite{Gur21}):    
Any NE-free DG game has at least 2 players for each of which  
the non-terminal outcome $\infty$ is better than at least 2 terminal outcomes, 
like in example of Figure \ref{fig-exMartin-Vladimir}. 

\smallskip

Instead of merging all non-terminal plays (lassos),  
let us assign a separate outcome to  all  plays  that end 
in a fixed strongly connected components of the graph \cite{Tar72,Sha81}.  
The obtained family of games was introduced in \cite{Gur18}, 
where they were called {\em multistage DG games}.
As we know, merging outcomes can create new NE but cannot destroy old ones. 
It is open if Theorems 1 and 2 can be extended to the multistage DG games. 
In particular, it is open if each terminal multistage DG game has a NE.
Conjecture Catch 22 was recently disproved for this family \cite{BGLNP24}. 
It was proven in \cite{Gur18} 
that every 2-person multistage DG game has a NE. 
The case when 
each directed cycle is a separate outcome was considered in \cite{BGMW11}. 
Nash-solvability was characterized in case of 
two players and symmetric directed graphs.

\smallskip

Even play-once and terminal DG games may have {\em improvement cycles};  
see \cite[Figure 5 and Section 2]{AGH10} 
for definitions, examples, and more details. 
Find sufficient and/or necessary conditions excluding improvement cycles. 
Note that any improvement cycle free DG game has a NE. 

\subsection*{Acknowledgements}
The second author worked within the framework
of the HSE University Basic Research Program.
The third author was partially supported 
by the joint project of Kyoto University and Toyota Motor Corporation, 
titled “Advanced Mathematical Science for Mobility Society”,  
and by JSPS KAKENHI, Grant Numbers JP20H05967 and JP19K22841. 

\end{document}